%
\documentclass[runningheads]{llncs}
\usepackage{graphicx}
%

\usepackage{amsmath}
\usepackage{amssymb}
\usepackage{amsfonts}
\usepackage{amscd}
\usepackage{ stmaryrd } 

\usepackage[usenames]{color}
\usepackage{pgf,tikz}
\usetikzlibrary{arrows}
\usepackage{tikz-cd}  
\usepackage{hyperref}


\newcommand{\Exx}[2]{\mathbb{E}_{#1}\left(#2\right)}
\renewcommand{\Pr}[1]{\mathbb{P}\left(#1\right)}

\newcommand{\salg}[1]{\mathfrak {#1}}

\newcommand{\cf}{\chi}




\newcommand{\diff}[2]{\frac{\mathrm{d}  #1}{\mathrm{d}  #2}}

\def \d{\mbox{\(\,\mathrm{d}\)}}
 
\renewcommand{\epsilon}{\varepsilon}


\newcommand{\eq}[1]{\overset{\tiny{(#1)}}=}

\newcommand{\Rr}{\mathbb{R}}

\newcommand{\Nn}{\mathbb{N}}

\newcommand{\set}[2]{\{\,#1 \, \vert \, #2\,\} }
\newcommand{\bigset}[2]{\left\{\,#1 \, \big\vert \, #2\,\right\} }




\usepackage{mathtools} 



\newcommand{\ent}{S}

\begin{document}
\title{Entropy under disintegrations}
%
%
\author{Juan Pablo Vigneaux
\inst{1,2}
}
\authorrunning{J.~P. Vigneaux}
%
\institute{Institut de Math\'ematiques de Jussieu--Paris Rive Gauche (IMJ-PRG), Universit\'e de Paris, 8 place Aur\'elie N\'emours, 75013 Paris, France. \and 
Max Planck Institute for Mathematics in the Sciences, Inselstra{\ss}e 22, 04103 Leipzig, Germany. \\ 
\href{https://orcid.org/0000-0003-4696-4537}{orcid.org/0000-0003-4696-4537}
}
%
\maketitle              
\begin{abstract}
We consider the differential entropy of probability measures absolutely continuous with respect to a given $\sigma$-finite ``reference'' measure on an arbitrary measure space. We state the asymptotic equipartition property in this general case; the result is part of the folklore but our presentation is to some extent novel. Then we study a general framework under which such entropies satisfy a chain rule: disintegrations of measures. We give an asymptotic interpretation for conditional entropies in this case.  Finally, we apply our result to Haar measures in canonical relation. 

\keywords{Generalized entropy  \and Differential entropy \and AEP \and Chain rule \and Disintegration \and Topological group \and Haar measure \and Concentration of measure. }
\end{abstract}

\section{Introduction}

It is part of the ``folklore'' of information theory that given any measurable space $(E,\salg B)$ with reference measure $\mu$, and a probability measure $\rho$ on $E$ that is absolutely continuous with respect to $\mu$ (i.e. $\rho \ll \mu$), one can define a  \emph{differential entropy} $S_{\mu}(\rho)= -\int_{E} \log( \diff{\rho}{\mu}) \d\rho$ that gives the exponential growth rate of the $\mu^{\otimes n}$-volume of a typical set of realizations of $\rho^{\otimes n}$. Things are rarely treated at this level of generality in the literature, so the first purpose of this article is to state the \emph{asymptotic equipartition property (AEP)} for $S_{\mu}(\rho)$. This constitutes a unified treatment of the discrete and euclidean cases, which shows (again) that the differential entropy introduced by Shannon is not an unjustified \emph{ad hoc} device as some still claim. 

Then we concentrate on a question that has been largely neglected: what is the most general framework in which one can make sense of the chain rule? This is at least possible for any disintegration of a measure.

\begin{definition}[Disintegration]\label{def:disintegration} Let $T:(E,\salg B)\to (E_T,\salg B_T)$ be a measurable map, $\nu$ a $\sigma$-finite measure on $(E,\salg B)$, and $\xi$ a $\sigma$-finite measure on $(E_T,\salg B_T)$. The measure $\nu$ has a disintegration $\{\nu_t\}_{t\in E_T}$ with respect to $T$ and $\xi$, or a $(T,\xi)$-disintegration, if
\begin{enumerate}
\item $\nu_t$ is a $\sigma$-finite measure on $\salg{B}$ concentrated on $\{T=t\}$, which means that $\nu_t(T\neq t)=0$ for $\xi$-almost every $t$;
\item for each measurable nonnegative function $f:E\to \Rr$,
\begin{enumerate}
\item $t\mapsto \int_E f \d \nu_t$ is measurable, 
\item $\int_E f\d \nu= \int_{E_T} \left(\int_E  f(x) \d\nu_t(x) \right)\d \xi(t)$.
\end{enumerate}
\end{enumerate}
\end{definition}

We shall see that if the reference measure $\mu$ has a $(T,\xi)$-disintegration $\{\mu_t\}_{t\in E_T}$, then any probability $\rho$ absolutely continuous with respect to it has a $(T,T_*\rho)$-disintegration; each $\rho_t$ is absolutely continuous with respect to $\mu_t$, and its density can be obtained normalizing the restriction of $\diff{\rho}{\mu}$ to $\{T=t\}$. Moreover, the following chain rule holds:
 \begin{equation}\label{eq:generalized_chain_rule}
 \ent_{\mu}(\rho) =  \ent_{\xi}(T_*\rho) + \int_{E_T} \ent_{\mu_t}(\rho_t)  \d T_*\rho(t).
 \end{equation}
 
We study the meaning of $\int_{E_T} \ent_{\mu_t}(\rho_t)  \d T_*\rho(t)$ in terms of asymptotic volumes. Finally, we show that our generalized chain rule can be applied to Haar measures in canonical relation.

\section{Generalized differential entropy}

\subsection{Definition and AEP} 
Let $(E_X, \salg B)$ be a measurable space, supposed to be the range of some random variable $X$, and let $\mu$ be a $\sigma$-finite measure $\mu$ on it. In applications, several examples appear: 
\begin{enumerate}
\item $E_X$ a countable set, $\salg B$ the corresponding atomic $\sigma$-algebra, and $\mu$ the counting measure;
\item $E_X$ euclidean space, $\salg B$ its Borel $\sigma$-algebra, and $\mu$ the Lebesgue measure;
\item More generally: $E_X$ a locally compact topological group, $\salg B$ its Borel $\sigma$-algebra, and $\mu$ some Haar measure;
\item $(E_X, \salg B)$ arbitrary and $\mu$ a probability measure on it, that might be a prior in a Bayesian setting or an initial state in a physical/PDE setting.
\end{enumerate}
The reference measure $\mu$ gives the relevant notion of volume.

Let $\rho$ is a probability measure on  $(E_X, \salg B)$ absolutely continuous with respect to $\mu$, and $f$ a representative of the Radon-Nikodym derivative $\diff{\rho}{\mu}\in L^1(E_X,\mu_X)$. The \emph{generalized differential entropy} of $\rho$ with respect to (w.r.t.) $\mu$ is defined as
\begin{equation}\label{eq:def_gen_entropy}
 \ent_\mu(\rho):= \Exx{\rho}{-\ln \diff{\rho}{\mu}} = -\int_{E_X} f(x) \log f(x) \d\mu(x).
\end{equation}
This was introduced by Csisz\'ar in \cite{Csiszar1973}, see also Eq. (8) in \cite{Koliander2016}. Remark that the set where $f=0$, hence $\log(f) = -\infty$, is $\rho$-negligible.

Let $\{X_i:(\Omega, \salg F, \mathbb P) \to (E_X, \salg B, \mu) \}_{i\in \Nn}$ be a collection of i.i.d random variables with law $\rho$. The density of the joint variable $(X_1,...,X_n) $ w.r.t. $\mu^{\otimes n}$ is given by  $f_{X_1,...,X_n} (x_1,...,x_n) = \prod_{i=1}^n f(x_i)$. If the Lebesgue integral  in \eqref{eq:def_gen_entropy}
is finite, then 
\begin{equation}
-\frac{1}{n} \log f_{X_1,...,X_n}(X_1,...,X_n) \to \ent_\mu(\rho)
\end{equation}
$\mathbb P$-almost surely (resp. in probability) as a consequence of the strong (resp. weak) law of large numbers. The convergence in probability is enough to establish the following result.

\begin{proposition}[Asymptotic Equipartition Property]\label{prop:AEP}
Let $(E_X,\salg B,\mu)$ be a $\sigma$-finite measure space, and $\rho$ a probability measure on $(E_X,\salg B)$ such that  $\rho \ll \mu$ and $\ent_\mu(\rho)$ is finite. For every $\delta >0$, set
\begin{equation*}
A_\delta^{(n)}(\rho;\mu)  := \bigset{(x_1,...,x_n)\in E_X^n }{\left|-\frac 1n  \log f_{X_1,...,X_n}(X_1,...,X_n) - \ent_\mu(\rho)\right| \leq \delta}.
\end{equation*}
Then,
\begin{enumerate}
\item\label{AEP1} for every $\epsilon>0$, there exists $n_0\in \Nn$ such that, for all $n\geq n_0$, $$\Pr{A_\delta^{(n)}(\rho;\mu)}>1-\epsilon;$$
\item\label{AEP2} for every $n\in \Nn$, $$\mu^{\otimes n}(A_\delta^{(n)}(\rho;\mu)) \leq \exp\{n(\ent_\mu(\rho)+\delta)\};$$
\item\label{AEP3} for every $\epsilon>0$, there exists $n_0\in \Nn$ such that, for all $n\geq n_0$, $$\mu^{\otimes n}(A_\delta^{(n)}(\rho;\mu)) \geq (1-\epsilon)  \exp\{n(\ent_\mu(\rho)-\delta)\}.$$
\end{enumerate}
\end{proposition}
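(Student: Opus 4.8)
The plan is to reduce all three claims to the convergence in probability already recorded above, namely that $-\frac1n \log f_{X_1,\dots,X_n}(X_1,\dots,X_n) \to \ent_\mu(\rho)$, and then to pass between the laws $\rho^{\otimes n}$ and the reference volumes $\mu^{\otimes n}$ through the joint density. First I would note that, because the $X_i$ are i.i.d.\ with law $\rho$, the variables $-\log f(X_i)$ are i.i.d.\ with common mean $\Exx{\rho}{-\log f} = \ent_\mu(\rho)$, finite by hypothesis; the weak law of large numbers then yields the stated convergence. Since $A_\delta^{(n)}(\rho;\mu)$ is by definition exactly the event that the empirical average $-\frac1n\log f_{X_1,\dots,X_n}$ lies within $\delta$ of its limit $\ent_\mu(\rho)$, and since $(X_1,\dots,X_n)$ has law $\rho^{\otimes n}$, we get $\Pr{A_\delta^{(n)}(\rho;\mu)} = \rho^{\otimes n}(A_\delta^{(n)}(\rho;\mu)) \to 1$, which is Claim~\ref{AEP1}.

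The bridge to the volume estimates is the two-sided pointwise bound on the joint density that holds on the typical set itself: unwinding the defining inequality of $A_\delta^{(n)}(\rho;\mu)$ shows that for every $(x_1,\dots,x_n) \in A_\delta^{(n)}(\rho;\mu)$,
\[
\exp\{-n(\ent_\mu(\rho)+\delta)\} \;\leq\; f_{X_1,\dots,X_n}(x_1,\dots,x_n) \;\leq\; \exp\{-n(\ent_\mu(\rho)-\delta)\}.
\]
In particular the density is strictly positive there, so the logarithm is finite, consistently with membership in $A_\delta^{(n)}$.

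For Claim~\ref{AEP2} I would integrate the left inequality against $\mu^{\otimes n}$ over $A_\delta^{(n)}(\rho;\mu)$, obtaining
\[
1 \;\geq\; \rho^{\otimes n}(A_\delta^{(n)}(\rho;\mu)) \;=\; \int_{A_\delta^{(n)}(\rho;\mu)} f_{X_1,\dots,X_n}\, \d\mu^{\otimes n} \;\geq\; \exp\{-n(\ent_\mu(\rho)+\delta)\}\, \mu^{\otimes n}(A_\delta^{(n)}(\rho;\mu)),
\]
and rearranging. It is worth stressing that this chain is valid for a merely $\sigma$-finite $\mu$: we never integrate the constant $1$ over $E_X$, and the estimate itself forces $\mu^{\otimes n}(A_\delta^{(n)})$ to be finite, so no appeal to $\mu(E_X)<\infty$ is needed. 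For Claim~\ref{AEP3} I would instead integrate the right inequality, giving $\rho^{\otimes n}(A_\delta^{(n)}) \leq \exp\{-n(\ent_\mu(\rho)-\delta)\}\,\mu^{\otimes n}(A_\delta^{(n)})$, and combine it with Claim~\ref{AEP1}: for $n\geq n_0$ the left-hand side exceeds $1-\epsilon$, which rearranges to the asserted lower bound on $\mu^{\otimes n}(A_\delta^{(n)})$.

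There is no deep obstacle here: the entire probabilistic content sits in the law of large numbers invoked at the outset, and the remainder is the classical AEP bookkeeping. The only point genuinely requiring care, and the one distinguishing this from the textbook finite-alphabet or Lebesgue statements, is the measure-theoretic hygiene under a general $\sigma$-finite reference measure --- verifying that $(X_1,\dots,X_n)\sim\rho^{\otimes n}$ with density $\prod_i f(x_i)$ against $\mu^{\otimes n}$, that the locus $\{f=0\}$ is harmless because it is $\rho$-null and automatically excluded from $A_\delta^{(n)}$, and that finiteness of the reference volumes $\mu^{\otimes n}(A_\delta^{(n)})$ is a conclusion of the argument rather than a hypothesis.
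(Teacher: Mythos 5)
Your proposal is correct and follows exactly the route the paper takes: the paper defers the proof to \cite[Ch.~12]{Vigneaux2019-thesis} and explicitly describes it as ``very similar to the standard ones for (euclidean) differential entropy,'' i.e.\ the weak law of large numbers for Claim~\ref{AEP1} followed by integrating the two-sided density bound $\exp\{-n(\ent_\mu(\rho)+\delta)\}\leq f_{X_1,\dots,X_n}\leq \exp\{-n(\ent_\mu(\rho)-\delta)\}$ over the typical set for Claims~\ref{AEP2} and~\ref{AEP3}. Your added remarks on the $\sigma$-finite setting (never integrating $1$ over $E_X$, finiteness of $\mu^{\otimes n}(A_\delta^{(n)})$ being a conclusion, harmlessness of $\{f=0\}$) are exactly the points of care this generalization requires.
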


We proved these claims in \cite[Ch.~12]{Vigneaux2019-thesis}; our proofs are very similar to the standard ones for (euclidean) differential entropy, see \cite[Ch.~8]{Cover2006}.

Below, we write $A_\delta^{(n)}$ if $\rho$ and $\mu$ are clear from context.

When $E_X$ is a countable set and $\mu$ the counting measure, every probability law $\rho$ on $E_X$ is  absolutely continuous with respect to $\mu$; if $p:E_X\to \Rr$ is its density, $\ent_{\mu}(\rho)$ corresponds to the familiar expression $-\sum_{x\in E_X} p(x)\log p(x)$. 

If $E_X= \Rr^n$, $\mu$ is the corresponding Lebesgue measure, and $\rho$ a probability law such that $\rho \ll \mu$, then  the derivative $\d \rho/\d\mu \in L^1(\Rr^n)$ corresponds to the elementary notion of density, and the quantity   $\ent_\mu(\rho)$ is the \emph{differential entropy} that was also introduced by Shannon in \cite{Shannon1948}. He remarked that the covariance of the differential entropy under diffeomorphisms is consistent with the measurement of randomness ``\emph{relative to an assumed standard}.'' For example, consider a linear automorphism of $\Rr^n$,  $\varphi(x_1,...,x_n)= (y_1,...,y_n)$,  represented by a matrix $A$. Set $\mu = \d x_1 \cdots \d x_n$ and $\nu = \d y_1 \cdots \d y_n$. It can be easily deduced from the change-of-variables formula that $\nu(\varphi(V)) = |\det A| \mu(V)$. Similarly,  $\varphi_*\rho$ has density $f(\varphi^{-1}(y))|\det A|^{-1}$ w.r.t. $\nu$, and this implies that $S_{\nu}(\varphi_*\rho) = S_{\mu}(\rho) + \log |\det A|$, cf. \cite[Eq.~8.71]{Cover2006}. Hence
\begin{equation}
\left| -\frac 1n \log \prod_{i=1}^n \diff{\varphi_*\rho}{\nu}(y_i) - S_{\nu}(\varphi_*\rho)\right| = \left| -\frac 1n \log \prod_{i=1}^n \diff{\rho}{\mu}(\varphi^{-1}(y_i)) - S_{\mu}(\rho)\right|,
\end{equation}
from which we deduce that $A_{\delta}^{(n)}(\varphi_* \rho, \nu) = \varphi^{\times n}(A_\delta^{(n)}(\rho;\mu))$ and consequently 
\begin{equation}
\nu^{\otimes n} (A_{\delta}^{(n)} (\varphi_*\rho; \nu)) = |\det A|^n \mu^{\otimes n}(A_{\delta}^{(n)}(\rho; \mu)),
\end{equation}
which is  consistent with the corresponding estimates given by Proposition \ref{prop:AEP}.

In the discrete case one could also work with any multiple of the counting measure, $\nu = \alpha \mu$, for $\alpha>0$. In this case, the chain rule for Radon-Nikodym derivatives (see  \cite[Sec.~19.40]{Hewitt1965}) gives
\begin{equation}
\diff{\rho}{\mu}= \diff{\rho}{\nu}\diff{\nu}{\mu} = \alpha \diff{\rho}{\nu},
\end{equation}
and therefore $
S_{\mu}(\rho) =S_{\nu}(\rho) - \log\alpha.$
Hence the discrete entropy depends on the choice of reference measure, contrary to what is usually stated. This function is invariant under a bijection of finite sets, but taking on both sides the counting measure as reference measure. The proper analogue of this in the euclidean case is a measure-preserving transformation (e.g. $|\det A|=1$ above), under which the differential entropy \emph{is} invariant.

For any $E_X$, if $\mu$ is  a probability law, the expression $\ent_\mu(\rho)$ is the opposite of the \emph{Kullback-Leibler divergence} $D_{KL}(\rho||\mu) := -S_{\mu}(\rho).$ The positivity of the divergence follows from a customary application of Jensen's inequality or from the asymptotic argument given in the next subsection.

The asymptotic relationship between volume and entropy given by the AEP can be summarized as follows:

\begin{corollary}\label{cor:entropy_and_volume}
$$\lim_{\delta \to 0} \lim_{n\to \infty} \frac{1}{n}\log \mu^{\otimes n}(A_\delta^{(n)}(\rho;\mu))=S_{\mu}(\rho).$$
\end{corollary}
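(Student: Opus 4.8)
The plan is to derive the corollary as an immediate consequence of parts \ref{AEP2} and \ref{AEP3} of Proposition \ref{prop:AEP}, by a sandwiching argument. Introduce the abbreviation $v_n(\delta) := \frac{1}{n}\log \mu^{\otimes n}(A_\delta^{(n)}(\rho;\mu))$, which is the quantity whose double limit we must compute. Note first that part \ref{AEP3} guarantees $\mu^{\otimes n}(A_\delta^{(n)})>0$ for all sufficiently large $n$, so $v_n(\delta)$ is well-defined and finite in the regime that matters for the inner limit.

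For the upper bound, I would take logarithms in the inequality of part \ref{AEP2} and divide by $n$, obtaining $v_n(\delta) \le \ent_\mu(\rho)+\delta$ for every $n$; hence $\limsup_{n\to\infty} v_n(\delta) \le \ent_\mu(\rho)+\delta$. For the lower bound, fix any $\epsilon\in(0,1)$ and apply part \ref{AEP3}: there is an $n_0$ such that for $n\ge n_0$ one has $v_n(\delta) \ge \frac{1}{n}\log(1-\epsilon) + \ent_\mu(\rho) - \delta$. Since the term $\frac{1}{n}\log(1-\epsilon)$ tends to $0$ as $n\to\infty$, letting $n\to\infty$ yields $\liminf_{n\to\infty} v_n(\delta) \ge \ent_\mu(\rho)-\delta$.

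Combining the two estimates gives, for every $\delta>0$,
\begin{equation*}
\ent_\mu(\rho)-\delta \;\le\; \liminf_{n\to\infty} v_n(\delta) \;\le\; \limsup_{n\to\infty} v_n(\delta) \;\le\; \ent_\mu(\rho)+\delta.
\end{equation*}
Letting $\delta\to 0$ forces both the inner $\liminf$ and $\limsup$ to converge to $\ent_\mu(\rho)$, which both justifies the existence of the iterated limit and identifies its value as $\ent_\mu(\rho)=S_\mu(\rho)$.

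I anticipate no serious obstacle: the content is entirely carried by Proposition \ref{prop:AEP}, and the only mild subtlety is that the inner limit $\lim_{n\to\infty} v_n(\delta)$ is not a priori guaranteed to exist for fixed $\delta$, since there is no monotonicity in $n$. The sandwich above sidesteps this, as what the statement really requires is that the $\liminf$ and $\limsup$ in $n$ collapse to the common value $\ent_\mu(\rho)$ as $\delta\to 0$; the displayed chain of inequalities makes this transparent.
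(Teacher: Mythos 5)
Your proof is correct and is essentially the argument the paper intends: the corollary is stated as a direct summary of Proposition~\ref{prop:AEP}, and the sandwich between the upper bound of part~\ref{AEP2} and the lower bound of part~\ref{AEP3}, with the $\frac{1}{n}\log(1-\epsilon)$ term vanishing, is exactly the standard derivation. Your remark that the inner limit should be read as the collapse of $\liminf$ and $\limsup$ onto $\ent_\mu(\rho)$ as $\delta\to 0$ is a sensible and correct reading of the statement.
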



\subsection{Certainty, positivity and divergence}\label{sec:certainty_divergence}

Proposition \ref{prop:AEP} gives a meaning to the divergence and the positivity/negativity of $S_{\mu}(\rho)$.
\begin{enumerate}
\item Discrete case: let $E_X$ be a countable set and $\mu$ be the counting measure.  Irrespective of $\rho$, the cardinality of $\mu^{\otimes n}(A_\delta^{(n)}(\rho;\mu))$ is at least $1$, hence the limit in Corollary \ref{cor:entropy_and_volume} is always positive, which establishes $\ent_\mu(\rho)\geq 0$. The case $S_\mu(\rho)= 0$ corresponds to certainty: if $\rho=\delta_{x_0}$, for certain $x_0\in E_X$, then $A_\delta^{(n)}=\{(x_0,...,x_0)\}$. 

\item Euclidean case: $E_X$ Euclidean space, $\mu$ Lebesgue measure. The differential entropy is negative if the volume of the typical set is (asymptotically) smaller than $1$.  Moreover, the divergence of the differential entropy to $-\infty$ correspond to asymptotic concentration on a $\mu$-negligible set. For instance, if $\rho$ has  $\mu(B(x_0,\epsilon))^{-1} \cf_{B(x_0,\epsilon)}$, then $\ent_{\lambda_d}(\rho) = \log(|B(x_0,\epsilon)|) = \log(c_d\epsilon^d)$, where $c_d$ is a constant characteristic of each dimension $d$.
By part \eqref{AEP2} of Proposition \ref{prop:AEP}, $
|A_\delta^{(n)}| \leq \exp(nd\log \epsilon + Cn),$
which means that, for fixed $n$, the volume goes to zero as $\epsilon\to 0$, as intuition would suggest. Therefore, \emph{ the divergent entropy is necessary to obtain the good volume estimates.}

\item Whereas the positivity of the (discrete) entropy arises from a \emph{lower bound} to the volume of typical sets, the positivity of the Kullback-Leibler is of a different nature: it comes from an \emph{upper bound}. In fact, when $\mu$ and $\rho$ are probability measures such that $\rho \ll \mu$, the inequality $\mu^{\otimes n }(A_\delta^{(n)}(\rho;\mu))\leq 1$ holds for any $\delta$, which translates into $S_\mu(\rho)\leq 0$ and therefore $D_{KL}(\rho||\mu)\geq 0$. In general, there is no upper bound for the divergence.
\end{enumerate}

Remark that entropy maximization problems are well defined when the reference measure is a finite measure. 

\section{Chain rule}

\subsection{Disintegration of measures }

We summarize in this section some fundamental results on disintegrations as presented in \cite{Chang1997}. Throughout it,  $(E,\salg B)$ and $(E_T,\salg B_T)$ are measurable spaces equipped with $\sigma$-finite measures $\nu$ and $\xi$, respectively, and $T:(E,\salg B)\to (E_T,\salg B_T)$ is a measurable map.

Definition \ref{def:disintegration} is partly motivated by the following observation: when $E_T$ is finite and $\salg B_T$ is its algebra of subsets $2^{E_T}$, we can associate to any probability $P$ on $(E,\salg B)$ a $(T, T_*P)$-disintegration given by the conditional measures $P_t:\salg B \to \Rr, B\mapsto P(B\cap \{T=t\})/P(T=t)$, indexed by $t\in E_T$. In particular, 
\begin{equation}\label{eq:disintegration-discrete}
P(B) = \sum_{t\in E_T} P(T=t) P_t(B).
\end{equation}
 Remark that $P_t$ is only well defined on the maximal set of $t\in E_T$ such that $T_*P(t) > 0$, but only these $t$ play a role in the disintegration  \eqref{eq:disintegration-discrete}.

General disintegrations give \emph{regular versions} of conditional expectations. Let  $\nu$ be a probability measure, $\xi = T_*\nu$, and $\{\nu_t\}$ the corresponding $(T,\xi)$-disintegration. Then the function $x\in E\mapsto \int_E \chi_B(x) \d \nu_{T(x)}$---where $\chi_B$ denotes the characteristic function---is $\sigma_T$ measurable and a regular version of the conditional probability $\nu_{\sigma(T)}(B)$ as defined by Kolmogorov.


Disintegrations exist under very general hypotheses. For instance, if $\nu$ is Radon, $T_*\nu \ll \xi$, and $\salg B_T$ is countably generated and contains all the singletons $\{t\}$, then $\nu$ has a $(T,\xi)$-disintegration. The resulting measures $\nu_t$ measures are uniquely determined up to an almost sure equivalence. See \cite[Thm.~1]{Chang1997}.

As we explained in the introduction, a disintegration of a reference measure induces disintegrations of all measures absolutely continuous with respect to it. 

\begin{proposition}\label{prop:properties-disintegrations-with-densities}
Let $\nu$ have a $(T,\xi)$-disintegration $\{\nu_t\}$ and let $\rho$ be absolutely continuous with respect to $\nu$ with finite density $r(x)$, with each $\nu$, $\xi$ and $\rho$ $\sigma$-finite.
\begin{enumerate}
\item\label{properties-disintegrations-with-densities-1} The measure $\rho$ has a $(T,\xi)$-disintegration $\{\tilde \rho_t\}$ where each $\tilde \rho_t$ is dominated by the corresponding $\nu_t$, with density $r(x)$.
\item\label{properties-disintegrations-with-densities-2} The image measure $T_*\rho$ is absolutely continuous with respect to $\xi$, with density $\int_E  r \d \nu_t$.
\item\label{properties-disintegrations-with-densities-3} The measures $\{\tilde \rho_t\}$ are finite for $\xi$-almost all $t$ if and only if $T_*\rho$ is $\sigma$-finite.
\item\label{properties-disintegrations-with-densities-4} The measures $\{\tilde \rho_t\}$ are probabilities for $\xi$-almost all $t$ if and only if $\xi=T_*\rho$.
\item\label{properties-disintegrations-with-densities-5} If $T_*\rho$ is  $\sigma$-finite then $0<\nu_t r < \infty$ $T_*\nu$-almost surely, and the measures $\{\rho_t\}$ given by
$$ \int_E f \d \rho_t = \frac{\int_E fr \d \nu_t}{\int_E r \d  \nu_t}$$
are probabilities that give a $(T, T_*\rho)$-disintegration of $\rho$.
\end{enumerate}
\end{proposition}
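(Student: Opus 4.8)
The plan is to base everything on the single construction $\tilde\rho_t(B):=\int_B r\d\nu_t$, i.e. $\tilde\rho_t$ is the measure with density $r$ with respect to $\nu_t$, and then to obtain parts 2--5 as corollaries of part 1 together with the concentration property $\nu_t(T\neq t)=0$. For part 1 I would verify the three clauses of Definition \ref{def:disintegration} in turn. Because $r$ is finite everywhere and each $\nu_t$ is $\sigma$-finite, $\tilde\rho_t$ is $\sigma$-finite, covering $E$ by the sets $\{r\le k\}\cap E_n$ with $\nu_t(E_n)<\infty$. Concentration is immediate: $\tilde\rho_t(T\neq t)=\int_{\{T\neq t\}}r\d\nu_t=0$ for $\xi$-almost every $t$, since $\nu_t(T\neq t)=0$ and $r<\infty$. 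The measurability of $t\mapsto\int_E f\d\tilde\rho_t=\int_E fr\d\nu_t$ is exactly the measurability clause of the disintegration of $\nu$ applied to the nonnegative measurable function $fr$. Finally the mass formula is the two-line computation $\int_E f\d\rho=\int_E fr\d\nu=\int_{E_T}(\int_E fr\d\nu_t)\d\xi(t)=\int_{E_T}(\int_E f\d\tilde\rho_t)\d\xi(t)$, the first equality being $\d\rho=r\d\nu$ and the second the disintegration of $\nu$ applied to $fr$.

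For part 2 I would feed $f=\chi_{T^{-1}(B_T)}=\chi_{B_T}\circ T$ into the mass formula; by concentration, $\int_E(\chi_{B_T}\circ T)\d\tilde\rho_t=\chi_{B_T}(t)\,\tilde\rho_t(E)=\chi_{B_T}(t)\int_E r\d\nu_t$ for $\xi$-almost every $t$, whence $T_*\rho(B_T)=\int_{B_T}(\int_E r\d\nu_t)\d\xi(t)$, which identifies $T_*\rho\ll\xi$ with density $\nu_t r:=\int_E r\d\nu_t$. Parts 3 and 4 then reduce to an elementary lemma: if $\d m=g\d\xi$ with $\xi$ $\sigma$-finite and $g\ge0$ measurable, then $m$ is $\sigma$-finite iff $\xi(\{g=\infty\})=0$, and $m=\xi$ iff $g=1$ $\xi$-almost everywhere (uniqueness of the Radon--Nikodym derivative). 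Taking $g=\nu_t r$ and $m=T_*\rho$, the finiteness of $\tilde\rho_t$, i.e. $\tilde\rho_t(E)=\nu_t r<\infty$ for $\xi$-a.e. $t$, is equivalent to $\sigma$-finiteness of $T_*\rho$, and $\tilde\rho_t$ being a probability for $\xi$-a.e. $t$, i.e. $\nu_t r=1$ $\xi$-a.e., is equivalent to $\xi=T_*\rho$.

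For part 5 I would argue as follows. Finiteness $\nu_t r<\infty$ holds $\xi$-a.e. by part 3; the same concentration computation applied directly to $\nu$ gives $T_*\nu\ll\xi$ with density $\nu_t(E)$, so finiteness is transported to $T_*\nu$- and $T_*\rho$-almost surely by absolute continuity. Positivity $\nu_t r>0$ is naturally an almost-sure statement with respect to $T_*\rho$: the set $Z=\{t:\nu_t r=0\}$ has $T_*\rho(Z)=\int_Z\nu_t r\d\xi=0$, so $0<\nu_t r<\infty$ on a set of full $T_*\rho$-measure, which is all the normalization requires. On that set I would set $\d\rho_t=\d\tilde\rho_t/(\nu_t r)$; concentration and measurability are inherited from $\tilde\rho_t$ (the quotient of the measurable numerator $\int_E fr\d\nu_t$ by the measurable positive denominator $\nu_t r$ is measurable), each $\rho_t$ is a probability by construction, and the disintegration identity is the clean cancellation
\[
\int_{E_T}\left(\int_E f\d\rho_t\right)\d T_*\rho(t)=\int_{E_T}\frac{\int_E fr\d\nu_t}{\nu_t r}\,(\nu_t r)\d\xi(t)=\int_{E_T}\left(\int_E fr\d\nu_t\right)\d\xi(t)=\int_E f\d\rho,
\]
where the outer measure is rewritten using $\d T_*\rho=(\nu_t r)\d\xi$ from part 2 and the last step is the disintegration of $\nu$ applied to $fr$.

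The main obstacle is the bookkeeping of the exceptional sets and the choice of reference measure in part 5, rather than any single hard estimate. One must (i) fix an everywhere-finite representative of $r$ so that $\tilde\rho_t$ is genuinely $\sigma$-finite and concentrated; (ii) track which base measure governs each almost-sure claim, since finiteness is an $\xi$-a.e. statement transported by absolute continuity, whereas positivity of $\nu_t r$ is only a $T_*\rho$-a.e. statement and can fail with respect to $T_*\nu$ (for instance for a coordinate projection of Lebesgue measure with $\rho$ supported on a half-space); and (iii) observe that on $Z=\{\nu_t r=0\}$ one also has $\int_E fr\d\nu_t=0$, because $r=0$ holds $\nu_t$-a.e. there, so the quotient defining $\rho_t$ never contributes spuriously to the mass identity. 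Everything else is the routine verification of the disintegration axioms.
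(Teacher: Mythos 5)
Your proof is correct, and it is essentially the standard argument: the paper itself gives no proof of this proposition but defers to Chang and Pollard (Theorem~2 of \cite{Chang1997}), whose proof proceeds exactly as yours does---define $\tilde\rho_t = r\cdot\nu_t$, verify the disintegration axioms via the $\nu$-disintegration applied to $fr$, read off the density of $T_*\rho$ by testing against $\chi_{B_T}\circ T$ and using concentration, and normalize in part 5. Your remark in the final paragraph is also substantively right and worth flagging: the positivity $\nu_t r>0$ can only be asserted $T_*\rho$-almost surely (since $T_*\rho\{\nu_t r=0\}=\int_{\{\nu_t r = 0\}}\nu_t r\,\d\xi=0$), and your half-space counterexample shows it genuinely fails $T_*\nu$-almost surely; the occurrence of $T_*\nu$ in the statement of part 5 is a typo for $T_*\rho$, which is the measure appearing at the corresponding point of \cite{Chang1997}, and your proof establishes the corrected claim.
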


\begin{example}[Product spaces] \label{ex:disintegrations_product-spaces}
We suppose that $(E, \salg B, \nu)$ is the product of two measured spaces spaces $(E_T, \salg B_T, \xi)$ and $(E_S, \salg B_S, \nu)$, with $\xi$ and $\nu$ both $\sigma$-finite. Let $\nu_t$ be the image of $\nu$ under the inclusion $s\mapsto (t,s)$. Then Fubini's theorem implies that $\nu_t$ is a $(T,\xi)$-disintegration of $\nu$. (Remark that  $\xi \neq T_*\nu$. In general, the measure $T_*\nu$ is not even $\sigma$-finite.) If $r(t,s)$ is the density of a probability $\rho$ on $(E,\salg B)$, then $\rho_t \ll \nu_t$ with density $r(t,s)$---the value of $t$ being fixed---and $\tilde \rho_t$ is a probability supported on $\{T=t\}$ with density
$ r(t,s)/\int_{E_S} r(t,s)\d \nu( s).$
\end{example}

\subsection{Chain rule under disintegrations}

Any disintegration gives a chain rule for entropy.

\begin{proposition}[Chain rule for general disintegrations]\label{prop:chain-rule-disintegration}
 Let $T:(E_X,\salg B_X) \to (E_Y,\salg B_Y)$ be a measurable map between arbitrary measurable spaces, $\mu$ (respectively  $\nu$) a $\sigma$-finite measure on $(E_X,\salg B_X)$ (resp. $(E_Y,\salg B_Y)$), and $\{\mu_y\}$ a $(T, \nu)$-disintegration of $\mu$. Then any probability measure $\rho$ absolutely continuous w.r.t. $\mu$, with density $r$, has a $(T, \nu)$-disintegration $\{\tilde \rho_y\}_{y\in Y}$ such that  for each $y$, $\tilde \rho_y = r \cdot \mu_y$. Additionally, $\rho$ has a $(T,T_*\rho)$-disintegration $\{ \rho_y\}_{y\in Y}$ such that each $\rho_y$ is a probability measure with density $r/\int_{E_X} r \d \mu_y $ w.r.t. $\mu_y$, and the following chain rule holds:
 \begin{equation}\label{eq:generalized_chain_rule}
 \ent_{\mu}(\rho) =  \ent_{\nu}(T_*\rho) +  \int_{E_Y} \ent_{\mu_y}(\rho_y)\d T_*\rho(y).
 \end{equation}
 \end{proposition}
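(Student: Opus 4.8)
The plan is to deduce \eqref{eq:generalized_chain_rule} from a pointwise factorization of densities, using the expectation form of the entropy. Everything preceding the displayed equation follows directly from Proposition \ref{prop:properties-disintegrations-with-densities}: since $\rho$ is a probability measure, $T_*\rho$ is $\sigma$-finite, so parts \eqref{properties-disintegrations-with-densities-2} and \eqref{properties-disintegrations-with-densities-5} apply and give the $(T,\nu)$-disintegration $\tilde\rho_y = r\cdot\mu_y$, the marginal density $p := \diff{(T_*\rho)}{\nu} = \int_{E_X} r\d\mu_y$, and the $(T,T_*\rho)$-disintegration $\{\rho_y\}$ with $\diff{\rho_y}{\mu_y} = r/p(y)$. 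It remains to establish the chain rule itself.

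First I would record the pointwise identity $r = (p\circ T)\cdot\dfrac{r}{p\circ T}$, valid $\rho$-almost everywhere: indeed $r>0$ holds $\rho$-a.s. (the set $\{r=0\}$ is $\rho$-negligible), while $T_*\rho\ll\nu$ forces $T_*\rho(\{p=0\})=\int_{\{p=0\}}p\d\nu=0$ and $p<\infty$ $\nu$-a.e., so $0<p\circ T<\infty$ holds $\rho$-a.e. Taking logarithms,
\begin{equation*}
-\log r = -\log(p\circ T) - \log\frac{r}{p\circ T} \qquad (\rho\text{-a.e.}).
\end{equation*}
Since $\ent_\mu(\rho)=\Exx{\rho}{-\log r}$, integrating this identity against $\rho$ splits the entropy into two expectations. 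The first is the marginal term: because the law of $T$ under $\rho$ is $T_*\rho$ and $p$ is its $\nu$-density,
\begin{equation*}
\Exx{\rho}{-\log(p\circ T)} = \Exx{T_*\rho}{-\log p} = \ent_\nu(T_*\rho).
\end{equation*}

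For the second expectation I would disintegrate $\rho$ along $\{\rho_y\}$ from part \eqref{properties-disintegrations-with-densities-5}. As each $\rho_y$ is concentrated on $\{T=y\}$, on its support $p\circ T = p(y)$, so $r/(p\circ T) = r/p(y) = \diff{\rho_y}{\mu_y}$; hence
\begin{equation*}
\Exx{\rho}{-\log\frac{r}{p\circ T}} = \int_{E_Y}\left(\int_{E_X} -\log\diff{\rho_y}{\mu_y}(x)\d\rho_y(x)\right)\d T_*\rho(y) = \int_{E_Y}\ent_{\mu_y}(\rho_y)\d T_*\rho(y).
\end{equation*}
Adding the two pieces yields \eqref{eq:generalized_chain_rule}.

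The step I expect to cost the most care is justifying the splitting of the single expectation $\Exx{\rho}{-\log r}$ into the sum of the two expectations above, since in this signed, possibly infinite setting one must avoid an $\infty-\infty$ indeterminacy. The plan is to decompose each logarithm into its positive and negative parts and apply Tonelli's theorem---property (2b) in Definition \ref{def:disintegration} applied to the disintegration of $\mu$---to each nonnegative piece separately, so that all intermediate integrals take values in $[0,+\infty]$; the additive identity then holds in $[-\infty,+\infty]$ under the standing assumption that $\ent_\mu(\rho)$ is finite. The same bookkeeping confirms that the inner integral $\int_{E_X}-\log(r/p(y))\d\rho_y$ coincides $T_*\rho$-a.e. with $\ent_{\mu_y}(\rho_y)$, so that the conditional term is well defined.
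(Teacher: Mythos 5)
Your proposal is correct and follows essentially the same route as the paper: both rest on the factorization $\diff{\rho}{\mu} = \mu_y(r)\,\diff{\rho_y}{\mu_y}$ (your $r = (p\circ T)\cdot r/(p\circ T)$), the identification of $\mu_y(r)$ as the $\nu$-density of $T_*\rho$, and the disintegration identity $\rho(f)=T_*\rho^y(\rho_y(f))$, differing only in whether the logarithm is split before or after disintegrating. Your added remark on splitting the expectation via positive and negative parts to avoid an $\infty-\infty$ indeterminacy is a welcome precision that the paper's proof leaves implicit.
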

 \begin{proof}
For convenience, we use here linear-functional notation:  $ \int_X f(x)\d\mu(x)$ is denoted $\mu(f)$ or $\mu^x(f(x))$ if we want to emphasize the variable integrated. 

Almost everything is a restatement of Proposition \ref{prop:properties-disintegrations-with-densities}. Remark that  $y\mapsto \mu_y(r)$ is the density of $T_*\rho$ with respect to $\nu$. 
 
Equation \eqref{eq:generalized_chain_rule} is established as follows:
 \begin{align}
 \ent_{\mu}(\rho) &\eq{def} \rho\left(-\log \diff{\rho}{\mu}\right)= T_*\rho^y \left(\rho_y \left(-\log \diff{\rho}{\mu}\right)\right) \label{eq:chain_rule_proof_1}\\
 &= T_*\rho^y \left(\rho_y \left(-\log \diff{\rho_y}{\mu_y} - \log \mu_y(r)\right)\right) \label{eq:chain_rule_proof_2}\\
 &=  T_*\rho^y \left(\rho_y \left(-\log \diff{\rho_y}{\mu_y}\right)\right)  + T_*\rho^y \left(-\log \diff{T_*\rho}{\nu}\right) \\
  &=  T_*\rho^y \left(\ent_{\mu_y}(\rho_y)\right)  + \ent_{\nu}(T_*\rho),\label{chain_rule_dis}
 \end{align}
 where  \eqref{eq:chain_rule_proof_1} is the fundamental property of the $T$-disintegration $\{\rho_y\}_y$ and \eqref{eq:chain_rule_proof_2} is justified by the equalities $$\diff{\rho}{\mu} = \diff{\tilde \rho_y}{\mu_y} =m_y(r)  \diff{ \rho_y}{\mu_y}.$$
 \end{proof}

\begin{example}\label{ex:product-case-continued} From the computations of Example \ref{ex:disintegrations_product-spaces}, it is easy to see that if $E_X=\Rr^{n}\times \Rr^m$, $\mu$ is the Lebesgue measure, and $T$ is the projection on the $\Rr^n$ factor, then \eqref{eq:generalized_chain_rule} corresponds to the familiar chain rule for Shannon's differential entropy. 
\end{example}
 \begin{example}[Chain rule in polar coordinates]
Let $E_X=\Rr^2\setminus\{0\}$, $\mu$ be the Lebesgue measure $\d x\d y$ on $\Rr^2$, and $\rho = f \d x\d y$ a probability measure. Every point $\vec{v}\in E_X$ can be parametrized by cartesian coordinates $(x,y)$ or polar coordinates $(r,\theta)$, i.e. $\vec{v}=\vec{v}(x,y) = \vec{v}(r,\theta)$. The parameter $r$ takes values from the set $E_R=]0,\infty[$, and $\theta$ from $E_{\Theta}=[0,2\pi[$; the functions $R:E_X\to E_R,\;v \mapsto r(\vec{v})$ and $\Theta:E_X\to E_{\Theta},\: \vec{v}\mapsto \theta(\vec v)$ can be seen as random variables with laws $R_*\rho$ and $\Theta_*\rho$, respectively. We equip $E_R$ (resp. $E_{\Theta}$) with the Lebesgue measure $\mu_R=\d r$ (resp. $\mu_H= \d\theta$).

The measure $\mu$ has a $(R,\mu_R)$-disintegration $\{ r \d\theta\}_{r\in E_R}$; here $r\d\theta$  is the uniform measure on $R^{-1}(r)$ of total mass $2\pi r$. This is a consequence of the change-of-variables formula:
\begin{equation}\label{eq:change-var-polar}
\int_{\Rr^2} \varphi(x,y)\d x \d y = \int_{[0,\infty[} \left(\int_0^{2\pi} \varphi(r,\theta) r \d \theta\right) \d r,
\end{equation}
which is precisely the disintegration property. Hence, according to Proposition \ref{prop:properties-disintegrations-with-densities}, $\rho$ disintegrates into probability measures $\{\rho_r\}_{r\in E_R}$, with each $\rho_r$ concentrated on $\{R=r\}$, absolutely continuous w.r.t. $\mu_r=r\d \theta$ and with density $f/\int_{0}^{2\pi} f(r,\theta)r\d\theta$. The exact chain rule \eqref{eq:generalized_chain_rule} holds in this case.
 
 This should be compared with Lemma 6.16 in  \cite{Lapidoth2003capacity}. They consider the random vector $(R,\Theta)$ as an $\Rr^2$ valued random variable, and the reference measure to be $\nu=\d r\d\theta$. The change-of-variables formula implies that $(R,\Theta)$ has density $rf(r,\theta)$ with respect to $\nu$, so 
 $S_{\nu}(\rho) = S_{\mu}(\rho) -\mathbb E_{\rho}(\log R)$. Then they apply the standard chain rule to $S_{\nu}(\rho)$, i.e. as in Examples \ref{ex:disintegrations_product-spaces} and  \ref{ex:product-case-continued}, to obtain a deformed chain rule for $S_{\mu}(\rho)$:
 \begin{equation}
 S_{\mu}(\rho) = S_{\mu_R} (R_*\rho) + \int_0^{\infty} \left(-\int_0^{2\pi} \log\left(\frac{f}{\int_{0}^{2\pi} f\d\theta} \right) \frac{f\d\theta}{\int_{0}^{2\pi} f\d\theta} \right) +  \mathbb E_{\rho}(\log R).
 \end{equation}
Our term $\int_{E_R} \ent_{\mu_r}(\rho_r)\d R_*\rho(r)$ comprises the last two terms in the previous equation. 
\end{example}

\begin{remark} Formula \eqref{eq:change-var-polar} is a particular case of the \emph{coarea formula} \cite[Thm.~2.93]{Ambrosio2000}, which gives a disintegration of the Hausdorff measure $\mathcal H^N$ restricted to a countably  $\mathcal H^N$-rectifiable subset $E$ of $\Rr^M$ with respect to a Lipschitz map $f:\Rr^M\to \Rr^k$ (with $k\leq N$) and the Lebesgue measure on $\Rr^k$. So the argument of the previous example also applies to the extra term $-\mathbb E_{(\mathbf x, \mathbf y)}[\log J_{\mathfrak p_{\mathbf y}}^{\mathcal E} (\mathbf x, \mathbf y)]$ in the chain rule of \cite[Thm.~41]{Koliander2016}, which could be avoided by an adequate choice of reference measures.
\end{remark}

Combining Corollary \ref{cor:entropy_and_volume} and the preceding proposition, we get a precise interpretation of the conditional term in terms of asymptotic growth of the volume of \emph{slices} of the typical set.

\begin{proposition}Keeping the setting of the previous proposition,
$$\lim_{\delta \to 0} \lim_{n\to \infty} 
\frac{1}{n} \log \left(\frac{\int_{E_Y} \mu^{\otimes n}_y(A_\delta^{(n)}(\rho;\mu)) \d \nu^{\otimes n} (y) }{\nu^{\otimes n}(A^{(n)}_\delta(T_*\rho;\nu))}\right) = \int_{E_Y} \ent_{\mu_y}(\rho_y)\d T_*\rho(y).$$
\end{proposition}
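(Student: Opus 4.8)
The plan is to recognize that the apparently complicated numerator is nothing but the total typical-set volume $\mu^{\otimes n}(A_\delta^{(n)}(\rho;\mu))$, after which the statement reduces to the chain rule combined with Corollary~\ref{cor:entropy_and_volume}. For this the integral $\int_{E_Y}\mu_y^{\otimes n}(\cdot)\d\nu^{\otimes n}(y)$ must be read with $y=(y_1,\dots,y_n)$ ranging over $E_Y^n$ and $\mu_y^{\otimes n}$ denoting the product measure $\mu_{y_1}\otimes\cdots\otimes\mu_{y_n}$ on $E_X^n$; the writing $\int_{E_Y}$ is shorthand for $\int_{E_Y^n}$.

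First I would establish that $\{\mu_{y_1}\otimes\cdots\otimes\mu_{y_n}\}_{(y_1,\dots,y_n)\in E_Y^n}$ is a $(T^{\times n},\nu^{\otimes n})$-disintegration of $\mu^{\otimes n}$. Each factor $\mu_{y_i}$ is concentrated on $\{T=y_i\}$, so the product is concentrated on $\{T^{\times n}=(y_1,\dots,y_n)\}$, which gives property~(1) of Definition~\ref{def:disintegration}. For property~(2) it suffices, by a monotone-class argument, to verify the disintegration identity on product functions $f(x_1,\dots,x_n)=\prod_i g_i(x_i)$; there it follows by iterating the single disintegration identity of Definition~\ref{def:disintegration}(2b) and applying Fubini's theorem, using $\sigma$-finiteness throughout. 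The measurability clause~(2a) for the product is inherited from the measurability of $y_i\mapsto\int g_i\,\d\mu_{y_i}$.

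With the product disintegration in hand, I would apply the disintegration identity to the nonnegative function $f=\chi_{A_\delta^{(n)}(\rho;\mu)}$. This collapses the numerator:
\[
\int_{E_Y^n}\left(\mu_{y_1}\otimes\cdots\otimes\mu_{y_n}\right)\!\bigl(A_\delta^{(n)}(\rho;\mu)\bigr)\d\nu^{\otimes n}(y)=\mu^{\otimes n}\bigl(A_\delta^{(n)}(\rho;\mu)\bigr),
\]
so the ratio inside the logarithm is simply $\mu^{\otimes n}(A_\delta^{(n)}(\rho;\mu))/\nu^{\otimes n}(A_\delta^{(n)}(T_*\rho;\nu))$. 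Assuming, as in Proposition~\ref{prop:AEP}, that the relevant entropies are finite, I would write $\frac1n\log$ of the quotient as a difference and invoke Corollary~\ref{cor:entropy_and_volume} for numerator and denominator separately; the iterated limits then give $\ent_\mu(\rho)-\ent_\nu(T_*\rho)$, which equals $\int_{E_Y}\ent_{\mu_y}(\rho_y)\d T_*\rho(y)$ by the chain rule of Proposition~\ref{prop:chain-rule-disintegration}.

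The main obstacle is twofold. The conceptual heart is the first step, the recognition and justification of the product disintegration, which is what makes the elaborate ratio collapse; it is routine but relies on the monotone-class extension from product functions. The analytic subtlety is the passage from $\frac1n\log$ of a quotient to a difference of limits: the inner limit $\lim_n\frac1n\log\mu^{\otimes n}(A_\delta^{(n)}(\rho;\mu))$ need not exist for fixed $\delta$, so rather than manipulate limits of differences I would bound $\liminf_n$ and $\limsup_n$ of each term directly from parts~\eqref{AEP2} and~\eqref{AEP3} of Proposition~\ref{prop:AEP}. These trap $\frac1n\log\mu^{\otimes n}(A_\delta^{(n)}(\rho;\mu))$ inside $[\ent_\mu(\rho)-\delta,\,\ent_\mu(\rho)+\delta]$ (and likewise for the denominator), so the difference is trapped within $O(\delta)$ of $\ent_\mu(\rho)-\ent_\nu(T_*\rho)$; letting $\delta\to0$ closes the argument by squeezing.
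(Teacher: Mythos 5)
Your proof follows essentially the same route as the paper's: establish that the fiberwise products $\{\mu_{y_1}\otimes\cdots\otimes\mu_{y_n}\}$ form a $(T^{\times n},\nu^{\otimes n})$-disintegration of $\mu^{\otimes n}$, collapse the numerator to $\mu^{\otimes n}(A_\delta^{(n)}(\rho;\mu))$, and then compare the resulting logarithmic identity with Corollary~\ref{cor:entropy_and_volume} and the chain rule of Proposition~\ref{prop:chain-rule-disintegration}. Your additional care in replacing the possibly nonexistent inner limit by $\liminf$/$\limsup$ bounds squeezed via parts~\eqref{AEP2} and~\eqref{AEP3} of Proposition~\ref{prop:AEP} is a worthwhile tightening of a step the paper passes over quickly, but it does not change the argument's structure.
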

\begin{proof}
It is easy to prove that if $\{\mu_y\}_y$ is a $(T,\nu)$-disintegration of $\mu$, then $\{\mu_y^{\otimes n}\}_y$ is a $(T^{\times n},\nu^{\otimes n})$-disintegration of $\mu^{\otimes n}$. The disintegration property reads
\begin{equation}
\mu^{\otimes n}(A) = \int_{E_Y} \mu^{\otimes n}_y(A) \d \nu^{\otimes n}(y),
\end{equation}
for any measurable set $A$. Hence
\begin{equation}
\log \mu^{\otimes n}(A_\delta^{(n)}(\rho;\mu)) = \log \nu^{\otimes n}(A_\delta^{(n)}(T_*\rho;\nu)) + \log \frac{\int_{E_Y} \mu_y^{\otimes n}(A_\delta^{(n)}(\rho;\mu)) \d \nu^{\otimes n} (y)}{\nu^{\otimes n}(A_\delta^{(n)}(T_*\rho;\nu))}.
\end{equation}
The results follows from the application of $\lim_{\delta\to 0} \lim_n \frac{1}{n}$ to this equality and comparison of the result with the chain rule.
\end{proof}
In connection to this result, remark that $(T_*\rho)^{\otimes n}$ concentrates on $A_\delta^{(n)}(T_*\rho;\nu)$ and has approximately density $1/\nu^{\otimes n}(A_\delta^{(n)}(T_*\rho;\nu))$, so $$\int_{E_Y} \mu^{\otimes n}_y(A_\delta^{(n)}(\rho;\mu)) \frac{1}{\nu^{\otimes n}(A^{(n)}_\delta(T_*\rho;\nu))}\d \nu^{\otimes n} (y)$$ is close to an average of $\mu^{\otimes n}_y(A_\delta^{(n)}(\rho;\mu)\cap T^{-1}(y))$, the ``typical part'' of each fiber $T^{-1}(y)$, according to the ``true'' law $(T_*\rho)^{\otimes n}$. 

\subsection{Locally compact topological groups}

Given a locally compact topological group $G$, there is a unique left-invariant positive measure (left Haar measure) up to a multiplicative constant \cite[Thms. 9.2.2 \& 9.2.6]{Cohn2013}. A particular choice of left Haar measure will be denoted by $\lambda$ with superscript $G$ e.g. $\lambda^G$. The disintegration of Haar measures is given by Weil's formula.

\begin{proposition}[Weil's formula]\label{prop:Weil_formula}
Let $G$ be a locally compact group and $H$ a closed normal subgroup of $G$. Given  Haar measures on two groups among $G$, $H$ and $G/H$, there is a Haar measure on the third one such that, 
for any integrable function $f:G\rightarrow \mathbb{R}$,
\begin{equation}
\int_G f(x) d\lambda^G(x)=\int_{G/H} \left( \int_H f(xy)\d\lambda^H(y)\right) \d\lambda^{G/H}(xH).
\end{equation}
\end{proposition}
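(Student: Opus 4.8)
The plan is to construct $\lambda^{G/H}$ by a Riesz--representation argument, prove the identity first for $f\in C_c(G)$ (continuous, compactly supported), and then extend to arbitrary integrable $f$ by monotone convergence. The key device is the averaging operator $P\colon C_c(G)\to C_c(G/H)$ given by $(Pf)(xH)=\int_H f(xy)\,\d\lambda^H(y)$. First I would verify that $P$ is well defined: for fixed $x$ the function $y\mapsto f(xy)$ lies in $C_c(H)$, so the integral converges, and replacing $x$ by $xh$ ($h\in H$) leaves the value unchanged by left-invariance of $\lambda^H$. Using that the quotient map $\pi\colon G\to G/H$ is continuous and open, one checks that $Pf$ is continuous with $\supp(Pf)\subseteq\pi(\supp f)$, hence $Pf\in C_c(G/H)$; a standard construction shows moreover that $P$ is surjective and carries nonnegative functions to nonnegative functions (with nonnegative preimages).

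I would then attempt to define a positive linear functional on $C_c(G/H)$ by $I(Pf):=\int_G f\,\d\lambda^G$. Everything hinges on showing $I$ is well defined, i.e. that $Pf\equiv 0$ implies $\int_G f\,\d\lambda^G=0$; this is the step I expect to be the main obstacle. To attack it I would use surjectivity to pick $g\in C_c(G)$ with $Pg\equiv 1$ on $\pi(\supp f)$, so that $\int_G f\,\d\lambda^G=\int_G f(x)(Pg)(xH)\,\d\lambda^G(x)=\int_G\int_H f(x)g(xy)\,\d\lambda^H(y)\,\d\lambda^G(x)$. Two applications of Fubini's theorem, a right translation on $G$ (which brings in the modular function $\Delta_G$), and the inversion formula on $H$ (which brings in $\Delta_H$) rearrange this expression into $\int_G g(x)\left(\int_H \Delta_G(y)\Delta_H(y)^{-1} f(xy)\,\d\lambda^H(y)\right)\d\lambda^G(x)$.

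This collapses to $0$ exactly when $\Delta_G(y)=\Delta_H(y)$ for all $y\in H$, since then the inner integral is $(Pf)(xH)=0$. Establishing this modular identity is where normality of $H$ enters, and it is the conceptual heart of the argument: for $y\in H$ the inner automorphism $c_y\colon x\mapsto yxy^{-1}$ stabilizes $H$, so it acts on $G$, on $H$, and on $G/H$; its modulus on $G$ is $\Delta_G(y)$, on $H$ is $\Delta_H(y)$, while the induced map on $G/H$ is conjugation by the trivial coset, i.e. the identity, of modulus $1$. By the multiplicativity of moduli along the sequence $H\hookrightarrow G\twoheadrightarrow G/H$ one concludes $\Delta_G(y)=\Delta_H(y)$, as required. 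Hence $I$ is a well-defined positive linear functional, and the Riesz representation theorem yields a Radon measure $m$ on $G/H$ with $\int_{G/H}(Pf)\,\d m=\int_G f\,\d\lambda^G$.

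Finally I would check that $m$ is left-invariant: the equivariance $P(f\circ L_a^{-1})=(Pf)\circ L_{aH}^{-1}$, where $L_a$ denotes left translation by $a\in G$, together with left-invariance of $\lambda^G$, shows that $I$ (hence $m$) is translation-invariant. By the uniqueness of Haar measure on $G/H$ (Thms.~9.2.2 \& 9.2.6 invoked above), $m$ is a Haar measure; having fixed $\lambda^G$ and $\lambda^H$, the functional $I$ is determined with no free multiplicative constant, which is precisely the assertion that any two of the three Haar measures determine the third making the formula hold. Extending the identity from $C_c(G)$ to all integrable $f$ by the standard monotone-convergence argument then completes the proof.
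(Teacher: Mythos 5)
The paper itself offers no proof of this proposition --- it cites Reiter--Stegeman, pp.~87--88 and Thm.~3.4.6 --- so your attempt can only be compared against the standard arguments in the literature. What you describe is the classical Weil--Bourbaki construction for a \emph{general} closed subgroup $H$: the averaging operator $P$, the well-definedness of $I(Pf):=\int_G f\,\d\lambda^G$, Riesz representation, and invariance. Your Fubini/translation/inversion computation is correct, and it correctly isolates the identity $\Delta_G|_H=\Delta_H$ as the crux of the well-definedness step.

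The gap is in your justification of that identity. The ``multiplicativity of moduli along $H\hookrightarrow G\twoheadrightarrow G/H$'', i.e.\ $\operatorname{mod}_G(\alpha)=\operatorname{mod}_H(\alpha|_H)\cdot\operatorname{mod}_{G/H}(\bar\alpha)$ for an automorphism $\alpha$ preserving $H$, is itself normally derived \emph{from} the quotient integration formula you are trying to prove, so as written the argument is circular; you would need to supply an independent proof of $\Delta_G|_H=\Delta_H$ for closed normal $H$. Moreover, for normal $H$ --- the only case the proposition concerns --- there is a much shorter route that bypasses both the well-definedness problem and the modular identity entirely: since $G/H$ is itself a locally compact group, $J(f):=\int_{G/H}(Pf)\,\d\lambda^{G/H}$ is a nonzero positive linear functional on $C_c(G)$, and the equivariance $P(f\circ L_a^{-1})=(Pf)\circ L_{aH}^{-1}$ that you already established shows $J$ is left-invariant; by uniqueness of the Haar measure on $G$ one gets $J=c\int_G(\cdot)\,\d\lambda^G$ for some $c>0$, and rescaling one of the three measures gives $c=1$. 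In that treatment $\Delta_G|_H=\Delta_H$ comes out as a corollary rather than serving as an input. The remaining ingredients of your sketch (surjectivity and positivity of $P$, $\supp(Pf)\subseteq\pi(\supp f)$, extension from $C_c(G)$ to integrable $f$ by monotone convergence) are standard and fine.
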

The three measures are said to be in \emph{canonical relation}, which is written $\lambda^G=\lambda^{G/H}\lambda^H$. For a proof of Proposition \ref{prop:Weil_formula}, see  pp. 87-88 and Theorem 3.4.6 of \cite{Reiter2000}.

For any element $[g]$ of $G/H$, representing a left coset $gH$, let us denote by $\lambda^H_{[g]}$ the image of $\lambda^H$ under the inclusion $\iota_g: H\to G, \:h\mapsto gH$. This is well defined i.e. does not depend on the chosen representative $g$: the image of $\iota_g$ depends only on the coset $gH$, and if $g_1,g_2$ are two elements of $G$ such that $g_1H = g_2H$, and $A$ is subset of $G$, the translation $h\mapsto g_2^{-1} g_1 h$ establishes a bijection   $\iota_{g_1}^{-1}(A)\overset\sim\to \iota_{g_2}^{-1}(A)$; the left invariance of the Haar measure implies that $\lambda^H(\iota_{g_1}^{-1}(A)) = \lambda^H(\iota_{g_2}^{-1}(A))$ i.e. $(\iota_{g_1})_* \lambda^H = (\iota_{g_2})_* \lambda^H$ as claimed. Proposition \ref{prop:Weil_formula} shows then that $\{\lambda^H_{[g]}\}_{[g]\in G/H}$ is a $(T,\lambda^{G/H})$-disintegration of $\lambda^G$. In view of this and Proposition \ref{prop:chain-rule-disintegration}, the following result follows.

\begin{proposition}[Chain rule, Haar case]
Let $G$ be a locally compact group, $H$ a closed normal subgroup of $G$, and $\lambda^G$, $\lambda^H$, and $\lambda^{G/H}$  Haar measures in canonical relation. Let $\rho$ be a probability measure on $G$. Denote by $T:G\to G/H$ the canonical projection. Then, there is $T$-disintegration $\{\rho_{[g]}\}_{[g]\in G/H}$ of $\rho$ such that each $\rho_{[g]}$ is a probability measure, and
\begin{equation}\label{eq:chain-rule-haar}
\ent_{\lambda^G}(\rho) = \ent_{\lambda^{G/H}} (\pi_*\rho) + \int_{G/H} \ent_{\lambda_{[g]}^H} (\rho_{[g]}) \d \pi_*\rho([g]).
\end{equation}
\end{proposition}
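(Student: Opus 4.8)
The plan is to read this statement as an immediate specialization of the general chain rule, Proposition~\ref{prop:chain-rule-disintegration}, to the disintegration of Haar measure furnished by Weil's formula. The dictionary is to put $E_X = G$, $E_Y = G/H$, let the measurable map $T$ be the canonical projection (written $\pi$ in \eqref{eq:chain-rule-haar}), set the reference measure $\mu = \lambda^G$ and the base measure $\nu = \lambda^{G/H}$, and take the disintegrating family $\{\mu_y\}$ to be $\{\lambda^H_{[g]}\}_{[g]\in G/H}$. The entire task then reduces to checking that these data meet the hypotheses of Proposition~\ref{prop:chain-rule-disintegration} and reading off its conclusion.

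First I would collect the inputs already in place. The projection $T$ is continuous, hence Borel measurable. For $\ent_{\lambda^G}(\rho)$ to be defined at all one needs $\rho \ll \lambda^G$, which I take as a standing assumption; then $\rho$ has a Radon--Nikodym density $r = \d\rho/\d\lambda^G$, finite $\lambda^G$-almost everywhere. The essential structural input---that $\{\lambda^H_{[g]}\}$ is well defined independently of the coset representative (by left-invariance of $\lambda^H$) and is a $(T,\lambda^{G/H})$-disintegration of $\lambda^G$ in the sense of Definition~\ref{def:disintegration}---has already been established in the paragraph preceding the proposition, directly from Weil's formula, Proposition~\ref{prop:Weil_formula}. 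Finally, $T_*\rho = \pi_*\rho$ is a probability measure, hence finite and in particular $\sigma$-finite.

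With these verifications in hand, I would simply invoke Proposition~\ref{prop:chain-rule-disintegration}. Because $\pi_*\rho$ is $\sigma$-finite, its conclusion produces a $(T,\pi_*\rho)$-disintegration $\{\rho_{[g]}\}$ of $\rho$ consisting of probability measures, each dominated by $\lambda^H_{[g]}$ with density $r/\int_G r\,\d\lambda^H_{[g]}$; and substituting $\mu = \lambda^G$, $\nu = \lambda^{G/H}$, $\mu_y = \lambda^H_{[g]}$, and $T_*\rho = \pi_*\rho$ into \eqref{eq:generalized_chain_rule} returns precisely the claimed identity \eqref{eq:chain-rule-haar}.

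The one genuine technical point---and the main obstacle---is $\sigma$-finiteness, which Proposition~\ref{prop:chain-rule-disintegration} and the whole disintegration machinery require of $\lambda^G$ and $\lambda^{G/H}$. A Haar measure is $\sigma$-finite precisely when the underlying group is $\sigma$-compact, which is not automatic for an arbitrary locally compact group. I would therefore either add the hypothesis that $G$ is $\sigma$-compact (whence $G/H$ is as well) or verify that the argument can be localized to the $\sigma$-compact portion of $G$ carrying the probability $\rho$. Once this is settled, nothing else needs to be proved: the well-definedness of $\lambda^H_{[g]}$ and the matching of Weil's formula with the disintegration property are exactly the content supplied just before the statement.
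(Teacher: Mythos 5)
Your proposal is correct and follows essentially the same route as the paper: the paper likewise establishes the well-definedness of $\lambda^H_{[g]}$ and identifies Weil's formula as giving a $(T,\lambda^{G/H})$-disintegration of $\lambda^G$, and then simply states that the result follows from the general chain rule for disintegrations. Your additional remarks---that $\rho\ll\lambda^G$ should be made explicit and that $\sigma$-finiteness of the Haar measures (equivalently, $\sigma$-compactness of $G$) is tacitly needed for the disintegration machinery---are reasonable points of care that the paper leaves implicit.
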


\bibliographystyle{splncs04}
\bibliography{../Bibliography}

\begin{thebibliography}{10}
\providecommand{\url}[1]{\texttt{#1}}
\providecommand{\urlprefix}{URL }
\providecommand{\doi}[1]{https://doi.org/#1}

\bibitem{Ambrosio2000}
Ambrosio, L., Fusco, N., Pallara, D.: Functions of Bounded Variation and Free
  Discontinuity Problems. Oxford Science Publications, Clarendon Press (2000)

\bibitem{Chang1997}
Chang, J.T., Pollard, D.: Conditioning as disintegration. Statistica
  Neerlandica  \textbf{51}(3),  287--317 (1997)

\bibitem{Cohn2013}
Cohn, D.: Measure Theory: Second Edition. Birkh{\"a}user Advanced Texts Basler
  Lehrb{\"u}cher, Springer New York (2013)

\bibitem{Cover2006}
Cover, T., Thomas, J.: Elements of Information Theory. A Wiley-Interscience
  publication, Wiley (2006)

\bibitem{Csiszar1973}
Csisz{\'a}r, I.: Generalized entropy and quantization problems. In:
  Transactions of the 6th Prague Conference on Information Theory, Statistical
  Decision Functions, Random Processes (Academia, Prague) (1973)

\bibitem{Hewitt1965}
Hewitt, E., Stromberg, K.: Real and Abstract Analysis: A modern treatment of
  the theory of functions of a real variable. Springer Berlin Heidelberg (1965)

\bibitem{Koliander2016}
{Koliander}, G., {Pichler}, G., {Riegler}, E., {Hlawatsch}, F.: Entropy and
  source coding for integer-dimensional singular random variables. IEEE
  Transactions on Information Theory  \textbf{62}(11),  6124--6154 (Nov 2016)

\bibitem{Lapidoth2003capacity}
Lapidoth, A., Moser, S.M.: Capacity bounds via duality with applications to
  multiple-antenna systems on flat-fading channels. IEEE Transactions on
  Information Theory  \textbf{49}(10),  2426--2467 (2003)

\bibitem{Reiter2000}
Reiter, H., Stegeman, J.D.: Classical Harmonic Analysis and Locally Compact
  Groups. No.~22 in London Mathematical Society monographs, new. ser.,
  Clarendon Press (2000)

\bibitem{Shannon1948}
Shannon, C.: A mathematical theory of communication. Bell System Technical
  Journal  \textbf{27},  379--423, 623--656 (1948)

\bibitem{Vigneaux2019-thesis}
Vigneaux, J.P.: Topology of Statistical Systems: A Cohomological Approach to
  Information Theory. Ph.D. thesis, Universit\'e de Paris (2019), available at
  \href{https://hal.archives-ouvertes.fr/tel-02951504v1}{https://hal.archives-ouvertes.fr/tel-02951504v1}

\end{thebibliography}

\end{document}